\title{Anomalous Edge Detection in Edge Exchangeable Social Network Models}
\author{\Name{Rui Luo}\Email{ruiluo@cityu.edu.hk} \\
       \addr{City University of Hong Kong, Kowloon Tong, Hong Kong} \\
       \addr{Cornell University, Ithaca, USA}\\
       \Name{Buddhika Nettasinghe}\Email{buddhika-nettasinghe@uiowa.edu} \\
       \addr{University of Iowa, Iowa City, USA}\\
       \Name{Vikram Krishnamurthy}\Email{vikramk@cornell.edu} \\
       \addr{Cornell University, Ithaca, USA}}
\begin{document}

\maketitle

\begin{abstract}%
This paper studies detecting anomalous edges in directed graphs that model social networks. We exploit edge exchangeability as a criterion for distinguishing anomalous edges from normal edges. Then we present an anomaly detector based on conformal prediction theory; this detector has a guaranteed upper bound for false positive rate. In numerical experiments, we show that the proposed algorithm achieves superior performance to baseline methods.
\end{abstract}

\begin{keywords}
  Edge Exchangeable Model, Conformal Prediction, Anomaly Detection, Social Networks.
\end{keywords}

\section{Introduction}
\label{ADND-sec:introduction}
The rapid growth of online social networks raises security concerns, including coordinated spam attacks \citep{wagner2012social}, cyber-bullies \citep{bindu2016mining}, and dissemination of misinformation \citep{nguyen2012containment}. The problem we address is: how do we spot such activities in a network? Specifically, we represent them as anomalous edges in a graph, which show significant deviations from normal connection patterns in the network. In this regard, we aim to develop algorithms to detect anomalous edges in social networks.

A number of novel graph-based approaches have been proposed for anomaly detection in networks, which account for the correlations of objects in networks \citep{akoglu2015graph}. Among the approaches focusing on detecting anomalous edges, some share a similar perspective with link prediction algorithms, using the proximity of adjacent nodes \citep{yu2018netwalk, akoglu2010oddball} or factorization of the adjacency matrix \citep{tong2011non, chakrabarti2004autopart} as evidence of abnormal connections. Other approaches analyze anomalous edges by aggregating them into graph snapshots \citep{shin2018patterns, eswaran2018spotlight}. One issue is that these graph- or snapshot-based approaches require high memory and often prohibitive time lag as well. %

More recently, \citet{ranshous2015anomaly, eswaran2018sedanspot, chang2021f} have proposed to detect individually surprising edges in the edge stream which have superior memory and time efficiency. However, due to the non-i.i.d. nature of graph-structured data, they fail to obtain provable guarantees for these algorithms. \citet{bhatia2020midas} develops a hypothesis testing framework and has shown guarantees on the false positive rate. However, it focuses on detecting micro-clusters of edges.

In this paper, we use edge exchangeability as a criterion for detecting anomalous edges in networks.
We then propose an algorithm which guarantees the false positive rate by combining edge exchangeability with conformal prediction. Conformal prediction generates set-valued predictions (the set covers the ground truth with high probability) and has been used successfully in several machine learning applications, including image classification \citep{shafer2008tutorial} and anomalous trajectory detection \citep{laxhammar2010conformal}. Specifically, we fit an edge exchangeable model to the network data. Then, we construct an algorithm (Conformal detector) based on conformal prediction to detect anomalous edges which break exchangeability. 

\noindent
{\bf Main Results and Organization: }

\noindent(1) Section \ref{ADND-sec: EEM} reviews the edge exchangeable models and justifies that edge exchangeability captures the nature of many real-world networks.

\noindent(2) Section \ref{ADND-subsec: ADND} presents the edge exchangeable model and related inference approach according to the asymmetric Dirichlet network distribution (ADND) \citep{williamson2016nonparametric}. Section \ref{ADND-subsec: detect anomaly} gives the definition of anomalous edges regarding edge exchangeability. Section \ref{ADND-subsec: detect anomaly} proposes the anomalous edge detection algorithm (Conformal detector) based on conformal prediction.

\noindent(3) Section \ref{ADND-sec: numerical} presents numerical results that demonstrate the performance of the proposed Conformal detector in anomalous edge detection using transaction data in Ethereum blockchain.

\noindent(4) Finally, the appendix proves that the proposed Conformal detector guarantees the upper bound of false positive rate (detecting normal edges as an anomaly).

\subsection{Related Work}
Previous related works can be considered under three categories:

\noindent(1) {\it Anomalous edge detection in graphs:}  
Existing anomalous edge detection methods focus on the following three aspects. 
\begin{list}{\labelitemi}{\leftmargin=1em}
\item Graph compression: \citet{chakrabarti2004autopart} encoded an input graph by alternatively arranging the nodes into disjoint partitions and splitting the partitions; then spots the anomalous edges whose removal reduces the total encoding cost significantly. \citet{tong2011non} factorized
the adjacency matrix of the input graph and flag edges with high reconstruction error as anomalies. \citet{aggarwal2011outlier} constructed a structural connectivity model of the input graph to define anomalous edges.

\item Node and subgraph similarity: \citet{yu2018netwalk} learned a dynamic node embedding which minimizes the pairwise distance of node embeddings along each walk and computes the abnormality of edges according to the Hadamard product of node embeddings. \citet{akoglu2010oddball} extracted egonet-based (1-hop neighborhood around a node) features and spots anomalous edges according to abnormal pairwise correlation patterns. \citet{shin2018patterns} and \citet{eswaran2018sedanspot} aggregated edges into graph snapshots and detect dense subgraphs with anomalous nodes and edges.

\item Edge frequency: \citet{ranshous2016scalable} proposed an edge probability function which scores each edge based on occurrence frequency, common neighbors and popularity of two endpoints. \citet{eswaran2018spotlight} identified edge anomalies as bursty occurrence which connect parts of the graph which are sparsely connected. \citet{bhatia2020midas} proposed a probabilistic framework to detect microcluster anomalies (suddenly arriving of groups of suspiciously similar edges). \citet{chang2021f} used a frequency-factorization technique to model the distributions of edge frequencies and flags edges with low likelihood of the observed frequency as anomaly. 
\end{list}

\vspace{0.1cm}
\noindent(2) {\it Exchangeable graph models:} 
Vertex exchangeable random graph models assume that the distribution of a graph is invariant to the labeling of its vertices \citep{wang1987stochastic, hoff2002latent, kemp2006learning, miller2009nonparametric}. An extensive list of vertex exchangeable random graph models can be found in \citep{lloyd2012random}. A limitation of vertex exchangeable random graph models is that they yield dense graphs \citep{lloyd2012random}. Edge exchangeable random graph models, on the other hand, are capable to produce sparse graphs and small-world behavior of real-world networks \citep{janson2018edge}. Several edge exchangeable models \citep{cai2016edge,crane2018edge,williamson2016nonparametric} have been proposed for modeling networks as exchangeable sequences of edges. These models assume that the distribution of a graph's edge sequence is invariant to the order of the edges \citep{cai2016edge}. 
In addition, \citet{caron2017sparse} proposed representing a network as an exchangeable point process and using completely random measures to construct sparse graphs.

\vspace{0.1cm}
\noindent(3) {\it Conformal prediction:}
Conformal prediction algorithms provide valid measures of confidence for individual predictions made by machine learning algorithms \citep{gammerman2007hedging}. It is based on the assumption that the samples are probabilistically exchangeable \citep{shafer2008tutorial}, which is slightly weaker than i.i.d. assumption. Different machine learning algorithms have previously been proposed and investigated for conformal prediction, e.g., the k-nearest neighbours algorithm \citep{vovk2005algorithmic}, SVM \citep{vovk2005algorithmic}, and neural networks \citep{papadopoulos2007conformal}. The computational efficiency of conformal prediction is considerably improved using a modification known as inductive conformal prediction \citep{papadopoulos2008inductive}, which avoids training the algorithm one time for each sample in the dataset. Conformal prediction has been used for testing exchangeability of a sequence of data \citep{vovk2003testing, fedorova2012plug, volkhonskiy2017inductive, cai2020real}. In addition, \citet{laxhammar2010conformal} applied conformal prediction in detecting anomalous trajectories in surveilance applications; \citet{ishimtsev2017conformal} proposed an anomaly detection algorithm for univariate time series by combinning k-nearest neighbors algorithm with the conformal prediction framework. Recently, \citep{https://doi.org/10.48550/arxiv.2210.02271} extended conformal prediction to the setting of a Hidden Markov Model, which greatly relaxes the i.i.d. assumption.

\begin{table} \label{table:symbols}
  \caption{Glossary of Symbols Used in This Article}
  \centering 
    \begin{tabular}{ cc } %
    \hline
    Symbols  & Description \\ %
    \hline
    $G$ & the directed graph representing the network \\
    $V$ & the set of nodes \\
    $E$ & the set of edges \\
    $N$ & the number of edges in $E$ \\
    $e_n$ & the edge in $E, n=1,\cdots, N$ \\
    $e_{N+1}$ & the new edge to be tested \\
    $u_n$ & the sender node of edge $e_n$ \\
    $v_n$ & the receiver node of edge $e_n$ \\
    $\alpha_n$ & the non-conformity score of edge $e_n$ \\
    $p_n$ & the $p$-value of edge $e_n$ being normal \\
    $\epsilon$ & the significance level \\
    $\textrm{Anom}_{n}^{\epsilon}$ & indicator of whether edge $e_n$ is anomalous or not at the significance level $\epsilon$ \\
    $A$ & The sender-specific topic distribution \\
    $B$ & The receiver-specific topic distribution \\
    $H$ & The base distribution for $A$ and $B$ \\
    $\tau$ & concentration parameters of $A$ and $B$ \\
    $\gamma$ & concentration parameters of $H$ \\
    $ \overset{d}{=} $ & equality in distribution \\

    \hline
    \end{tabular}
  \end{table}

\section{Edge Exchangeable Social Network Model}
\label{ADND-sec: EEM}
In this section, we first introduce the edge exchangeable models and their connections with the vertex exchangeable models. We then justify how these models capture some important properties of real world social networks, including sparsity and power-law degree distribution. We further illustrate via several real world examples that edge exchangeability is a useful assumption in modeling social networks, which guides the definition of anomalous edges in Section \ref{ADND-subsec: detect anomaly} and the anomaly detection algorithm in Section \ref{ADND-subsec: detect anomaly}.

\subsection{Edge Exchangeable Model}
\label{ADND-subsec: SBM intro}

Consider a directed graph $G=(V,E)$, where $V=\{v_1,\cdots,v_M\}$ is the node set with $M$ nodes and $E=\{e_1,\cdots,e_N\}$ is the edge set with $N$ edges. A sequence of random variables is (finitely) exchangeable if their joint distribution is invariant under permutations \citep{niepert2014exchangeable}. Exchangeability reflects the assumption that the variables do not depend on their indices although they may be dependent among themselves. Exchangeability\footnote{The simplest type of exchangeable process is sampling without replacement of an iid process \citep{orbanz2010bayesian}.} is a weaker assumption than iid; iid variables are automatically exchangeable \citep{orbanz2010bayesian}. 

This paper deals with edge exchangeable networks. We now briefly discuss vertex exchangeable networks. In graphs, the \emph{vertex exchangeability}, defined as follows, means that the distribution of the graph is invariant to relabelings of the vertices. 

\begin{definition}[Vertex Exchangeability]
Consider a graph $G=(V,E)$, where $V=\{1,\cdots,M\}$ and $E=\{e_1,\cdots,e_N\} \subseteq V\times V$. Such networks can be expressed as an adjacency matrix $\textbf{y}=(y_{ij})_{i,j\in V}$ with
\begin{equation}
    y_{ij} = \begin{cases}
     1, & (i,j)\in E, \\
     0, & \textrm{otherwise}.
    \end{cases}
\end{equation}
Let permutation $\pi:V \rightarrow V$. Relabeling vertices according to $\pi$ leads to the adjacency matrix $\textbf{y}^{\pi} = (y_{\pi(i)\pi(j)})_{i,j\in V}$. 
Then, $G$ is vertex exchangeable if for every permutation $\pi$, $\textbf{y}^{\pi} \overset{d}{=} \textbf{y}$. Here, $\overset{d}{=}$ denotes equality in distribution, that is,
\begin{equation}
    P(Y = \textbf{y} ) = P(Y = \textbf{y}^{\pi}), 
\end{equation}
where $Y=(Y_{ij})_{i,j\in V}$ is a random matrix defined on $ \{0,1\}^{V \times V}$. 
\end{definition}

The assumption of vertex exchangeability underlies many well-known network models, including the stochastic block model (SBM) \citep{wang1987stochastic}, latent space model \citep{hoff2002latent}, infinite relational model (IRM) \citep{kemp2006learning}, latent feature relational model (LFRM) \citep{miller2009nonparametric} and many others. However, they fall under the framework of Aldous-Hoover representation theorem \citep{aldous1981representations,hoover1979relations}, which guarantees that these graphs are either empty or dense with probability one \citep{orbanz2014bayesian}. Such models cannot produce sparsity in real-world graphs.

In comparison to vertex exchangeable networks, edge exchangeable networks \citep{cai2016edge,crane2018edge,williamson2016nonparametric} considered in this paper capture important properties of large-scale social networks such as sparsity, community structure and power law degree distributions.  Roughly speaking, for an edge exchangeable graph, the joint distribution of the edge sequence is invariant to the order of the edges. We define this formally as follows:

\begin{definition}[Edge Exchangeability]
Consider a graph $G=(V,E)$, where $V=\{1,\cdots,M\}$ and $E=\{e_1,\cdots,e_N\} \subseteq V\times V$. Each edge is represented as a sender-receiver pair $(u,v) \in V\times V$. Let permutation $\pi:\{1,\cdots,N\} \rightarrow \{1,\cdots,N\}$. 
Then, $G$ is edge exchangeable if for every permutation $\pi$, the edge sequence $(e_1,\cdots,e_N)$ is exchangeable, i.e.,
\begin{equation}
  (e_1,e_2,\cdots,e_{N}) \overset{d}{=} (e_{\pi(1)},e_{\pi(2)},\cdots,e_{\pi(N)}),
\end{equation}
for every permutation $\pi$. Here, $\overset{d}{=}$ denotes equality in distribution, that is,
\begin{equation}
    P(e_n=(u_n, v_n),  \scriptstyle{n=1,\cdots,N} \displaystyle) = P(e_n=(u_{\pi(n)}, v_{\pi(n)}), \scriptstyle n=1,\cdots,N \displaystyle), 
\end{equation}
for every permutation $\pi$.    
\end{definition}

\subsection{Why Edge Exchangeability}
\label{ADND-subsec: why edge exchange}
In this subsection, we demonstrate how edge exchangeability accounts for the complex behavior of social networks. First, in situations where the focus is on the interactions between social network members, e.g., communications, collaborations, and relationships, it is reasonable to represent edges as the statistical units of the network \citep{crane2018probabilistic}. Second, edge exchangeability assumes that the observed edges are a representative sample of the population of all edges \citep{crane2018edge}. In the phone call database example \citep{crane2018edge}, each phone call has the same probability to be sampled, leading to callers who make a large number of phone calls appearing in the database with a high frequency.
In contrast, vertex exchangeability assumes that sampled vertices are representative of the population of all vertices, which neglects the different frequencies of callers in making phone calls. Another advantage is that edge exchangeable models suffice to model interactions driven by edge growth. 

Although exchangeability often fails to hold for time series data due to the dependence structure in the data, exchangeability holds for networks where %
the distribution of edges is invariant under the addition of new edges\footnote{In other words, new edges are sampled without replacement from the same distribution of the existing edges.}. 
We justify edge exchangeability in social networks via the following examples. In the first two examples, edges are sampled without replacement from a fixed distribution, and the permutation of the edges will not affect their joint distribution.

\vspace{0.2in}

\noindent
{\bf Example 1. Online discussion thread:} A discussion thread gathers people interested in the same topics and their messages on such topics (e.g., sports, politics). In the thread, people have different roles such as opinion leaders, followers, and moderators, and they each have a regular pattern of participating in the discussion. If one person $u$'s post is commented on by another person $v$, $u$ will have a directed edge to $v$, forming an ordered pair $(u,v)$. On the online thread, individuals develop a consistent pattern of interacting with each other. Following the definition in \citet{crane2018edge}, we denote $\mathcal{P}$ as a set of participants in the discussion thread. %
Now we define a distribution $f=(f_e)_{e\in \textrm{fin}_2(\mathcal{P})}$ on $\textrm{fin}_2(\mathcal{P})$, where $\textrm{fin}_2(\mathcal{P})$ is the set of all ordered pairs of $\mathcal{P}$,
\begin{equation} \label{ADND-eq:crane distribution}
    P((u,v)=e|f) = f_e, \quad  e\in \textrm{fin}_2(\mathcal{P}),
\end{equation}
where
\begin{equation}
    (f_e)_{e\in \textrm{fin}_2(\mathcal{P})}: f_e\geq 0 \quad \textrm{and} \quad \sum_{e\in \textrm{fin}_2(\mathcal{P})} f_e = 1.
\end{equation}

Let the edges $(u_1, v_1), \cdots, (u_n, v_n)$ be independent, identically distributed random multisets of size 2 drawn from (\ref{ADND-eq:crane distribution}). This generating mechanism results in an exchangeable edge sequence.

\vspace{0.2in}
In addition to the above examples, \citep{crane2018edge} and \citep{williamson2016nonparametric} have verified that the edge exchangeable model outperforms vertex exchangeable model counterparts (IRM, SBM) in predictive likelihood on several real world datasets.

On the other hand, anomalous edges will likely lie an abnormal distance from other edges in the population and therefore causing the edge sequence nonexchangeable. Regarding Example 3 below, a situation where the new edges are anomalous and lead to nonexchangeable edge sequence is as follows.

\vspace{0.2in}
\noindent
{\bf Example 2. (Continued from Example 1) Nonexchangeable online discussion thread:} A hacker $h$ wants to attack the thread by spreading some fake news. He either joins several (trending) existing discussions by commenting about fake news in a post or starts a discussion by posting fake news. The first action results in him having in-edges with people in the existing discussions, e.g., $(u, h)$; the second action results in him having out-edges with people deluded by the fake news, e.g., $(h, v)$. Either action signifies an abnormal pattern where the edges are not representative samples of the edge population $\textrm{fin}_2(\mathcal{P})$. Thus if we permute the order of these edges in the existing edge sequence, the resulting edge sequence is nonexchangeable.

\vspace{0.2in}
\noindent
{\bf Example 3. University email network \citep{williamson2016nonparametric}:} A university’s email network records email communications among entities in the university. 
The email network follows a consistent communication pattern: administrators may send out a large number of group emails but receive a relatively small number of emails--that is, they often operate in the "sender" role but not the "receiver" role; faculties in collaboration may send and receive emails between each other on a regular basis--they operate as the "sender" role and the "receiver" role both at a moderate frequency. Each email from the email network, $(u, v)$, corresponds to an email from $u$ to $v$. Denote $P_U$ as a distribution of the sender and $P_V$ as a distribution of the receiver.
The emails $(u_1, v_1), \cdots, (u_n, v_n)$ are exchangeable as a sequence in the space of the joint distribution $P_U \times P_V$.

\section{Anomalous Edge Detection} 
\label{ADND-sec:anomaly detection}
In this section, we fit the social network edges using an Asymmetric Dirichlet Network Distribution (ADND) model whose prior is an exchangeable, two-level hierarchical Dirichlet process (HDP). The model accounts for the hierarchical nature of networks, where nodes are grouped into groups that are further divided into subgroups, as well as the asymmetric role that senders and receivers play in forming edges. 
We use a variational inference approach to approximate the posterior distribution of the latent variables and compute the log likelihood of new edges. The anomalous edges with regard to edge exchangeability are then defined. Additionally, based on conformal prediction, we propose Algorithm \ref{alg:base} for detecting the anomalous edges, which has a false positive rate guarantee (Proposition \ref{prop: guarantee}).

\subsection{Asymmetric Dirichlet Network Distribution (ADND) Model}
\label{ADND-subsec: ADND}
Consider a graph $G=(V,E)$, where $V$ is the node set and $E=\{e_1,\cdots,e_N\}$ is the edge set with $N$ edges. We adopt the asymmetric Dirichlet network distribution (ADND) model \citep{williamson2016nonparametric} to fit the network edges. 
Here, each edge is represented as a tuple $e_n=(u_n, v_n), n=1,\cdots,N$, where the nodes $u_n, v_n$ denotes a sender-receiver pair.  

The fundamental concept behind the ADND model involves using Bayesian nonparametrics to define the prior and likelihood for sampling edges from a countably infinite population of nodes. Denote the size of the node population as $W$, the ADND model can be described using a two-level Hierarchical Dirichlet Process (HDP) as follows:
\begin{equation}
\label{ADND-eq:HDP model}
\begin{split}
    &\Theta \sim \textrm{Dir}(\boldsymbol{\eta}), \, \textrm{where } \boldsymbol{\eta} = (\eta, \cdots, \eta) \in \mathbb{R}^{W+1}, \\
    &H \sim \textrm{DP}(\gamma \Theta), \\
    &A \sim \textrm{DP}(\tau H), \quad
    B \sim \textrm{DP}(\tau H), \\
\end{split}
\end{equation}
where $\gamma, \tau > 0$ are the concentration parameters of the first- and second-level Dirichlet process. The first level of the HDP, $H$, represents the topic distribution, where each topic is a distribution over nodes. %
$H$ is itself Dirichlet process distributed -- its base distribution $\Theta$ is a symmetric $(W+1)$-dimensional Dirichlet distribution with concentration parameter $\boldsymbol{\eta}$, where the $(W+1)$-th dimension is reserved for unseen nodes. 
The sender distribution $A$ and the receiver distribution $B$ are topic distributions that share the same set of topics with $H$, but weigh them using specific proportions $\boldsymbol{\beta}^{(A)}$ and $\boldsymbol{\beta}^{(B)}$.

\begin{figure}
	\centering
	\includegraphics[width=0.8\textwidth]{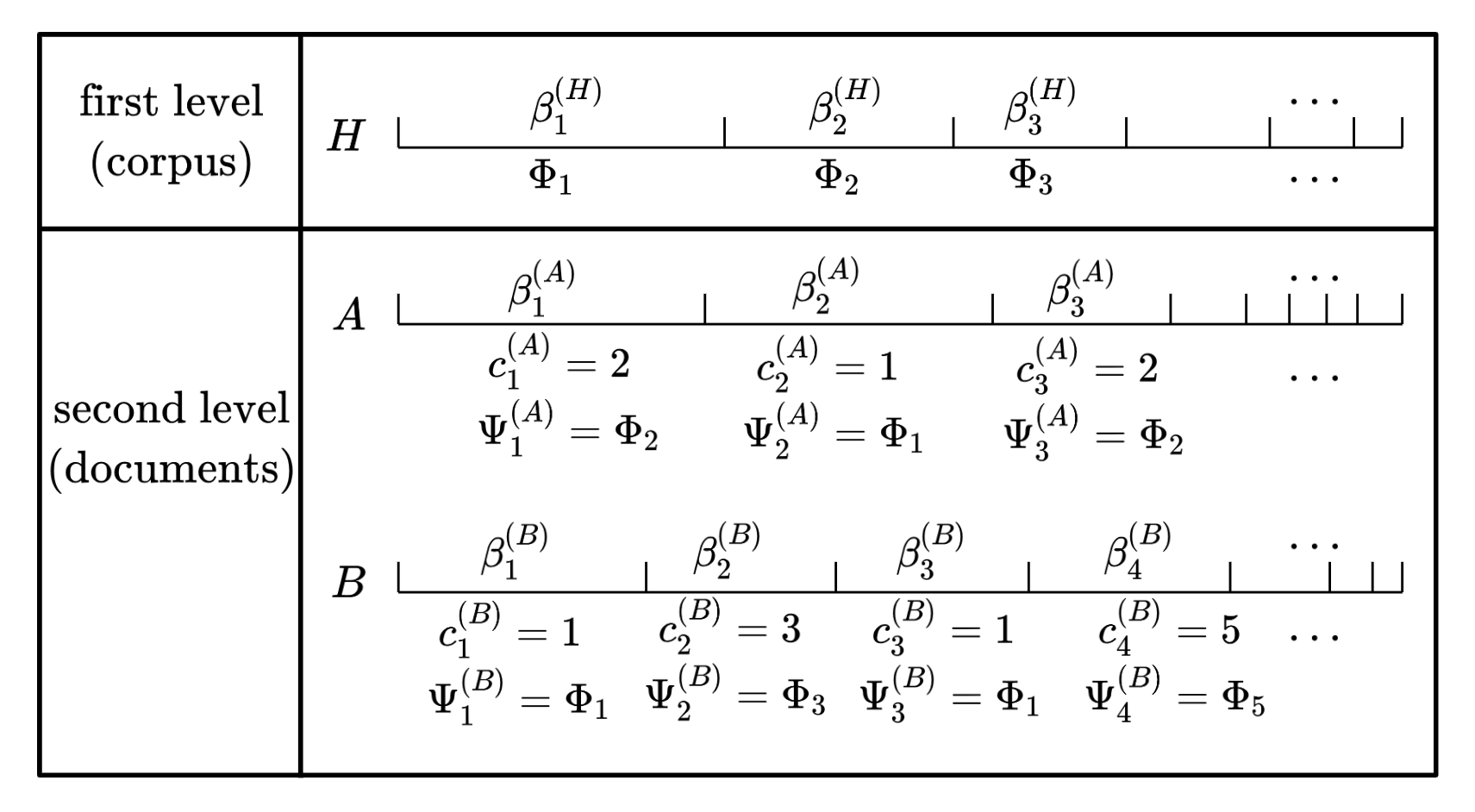}
	\caption{Illustration of the two-level stick-breaking representation of the hierarchical Dirichlet process. In the first level, $\phi_i \sim \Theta$ and $\boldsymbol{\beta}^{(H)} \sim \textrm{GEM}(\gamma)$; in the second level, $\boldsymbol{\beta}^{(A)}, \boldsymbol{\beta}^{(B)} \sim \textrm{GEM}(\tau)$,  $c_{t}^{(A)}, c_{t}^{(B)} \sim \textrm{Categorical}(\boldsymbol{\beta}^{(H)})$ and $\psi_{t}^{(A)} = \phi_{c_{t}^{(A)}}, \psi_{t}^{(B)} = \phi_{c_{t}^{(B)}}$.}
	\label{ADND-fig:stick-breaking}
\end{figure}

We follow \citet{wang2011online}'s stick-breaking representation of the HDP in (\ref{ADND-eq:HDP model}), which enables closed form variational updates in Section \ref{ADND-subsec:VI}. For the corpus-level DP draw, this representation is 
\begin{equation}
\label{ADND-eq:stick-breaking}
\begin{split}
    &\beta_{i}^{(H)'} \sim \textrm{Beta}(1, \gamma), \quad
    \beta_{i}^{(H)}  = \beta_{i}^{(H)'} \prod_{l=1}^{i-1} (1 - \beta_{l}^{(H)'}), \\
    &\phi_i \sim \Theta, \\
    &H = \sum_{i=1}^{\infty} \beta_{i}^{(H)} \delta_{\phi_i}. \\
\end{split}
\end{equation}
Thus, $H$ is discrete and has support at the atoms $\boldsymbol{\phi} = (\phi_i)_{\substack{i=1,\cdots, \infty}}$ with weights $\boldsymbol{\beta}^{(H)} = (\beta^{(H)}_i)_{\substack{i=1,\cdots, \infty}}$. The distribution for $\boldsymbol{\beta}^{(H)}$ is also written as $\boldsymbol{\beta}^{(H)} \sim \textrm{GEM}(\gamma)$ ("GEM" stands for Griffiths, Engen and McCloskey).

To represent the document-level atoms $\boldsymbol{\psi}^{(A)} = (\psi^{(A)}_{t})_{\substack{t=1,\cdots,\infty}}$ and $\boldsymbol{\psi}^{(B)} = (\psi^{(B)}_{t})_{\substack{t=1,\cdots,\infty}}$, we introduce two series of indicator variables, $\mathbf{c}^{(A)} = (c_{t}^{(A)})_{t=1,\cdots, \infty}$ and $\mathbf{c}^{(B)} = (c_{t}^{(B)})_{t=1,\cdots, \infty}$, which are drawn i.i.d.,
\begin{equation}\label{ADND-eq:categorical}
\begin{split}
    & c_{t}^{(A)}, c_{t}^{(B)} \sim \textrm{Categorical}(\boldsymbol{\beta}^{(H)})
\end{split}
\end{equation}
Then let
\begin{equation}
\label{ADND-eq:psi and phi}
\begin{split}
    & \psi_{t}^{(A)} = \phi_{c_{t}^{(A)}}, \quad
     \psi_{t}^{(B)} = \phi_{c_{t}^{(B)}}, \\
\end{split}
\end{equation}
By doing this, we can avoid having to explicitly express the document atoms $\boldsymbol{\psi}^{(A)}$ and $\boldsymbol{\psi}^{(B)}$.

Given the representation in Eq. (\ref{ADND-eq:stick-breaking}, \ref{ADND-eq:categorical}, \ref{ADND-eq:psi and phi}), the generative process for the $n$-th edge, $(u_{n}, v_{n})$, is as follows,
\begin{equation}
\label{ADND-eq:edge generative}
\begin{split}
    &z_{n}^{(A)} \sim \textrm{Categorical}(\boldsymbol{\beta}^{(A)}), \quad
    z_{n}^{(B)} \sim \textrm{Categorical}(\boldsymbol{\beta}^{(B)}), \\
    &u_{n} \sim \textrm{Categorical}(\phi_{c_{z_{n}^{(A)}}^{(A)}}), \quad
    v_{n} \sim \textrm{Categorical}(\phi_{c_{z_{n}^{(B)}}^{(B)}}). \\
\end{split}
\end{equation}
The indicator variables $z_{n}^{(A)}, z_{n}^{(B)}$ select document-level topics $\psi_{t}^{(A)}$ and $\psi_{t}^{(B)}$, respectively, according to the document-level stick weights $\boldsymbol{\beta}^{(A)}$ and $\boldsymbol{\beta}^{(B)}$. $\psi_{t}^{(A)}, \psi_{t}^{(B)}$ map to one of corpus-level topics $\boldsymbol{\phi}$ through the indicators $c_{t}^{(A)}$ and $c_{t}^{(B)}$. We describe the process of drawing senders of edges as follows: for the $n$-th edge, first choose the sender's topic indicator $z_{n}^{(A)} = t$; then choose the corpus-level topic indicator $c_{t}^{(A)} = i$. In this way, $\phi_{i}$ is the sender's chosen topic. We have,
\begin{equation}
    \begin{split}
        u_{n} \bigl\lvert z_{n}^{(A)} = t, c_{t}^{(A)} = i, \boldsymbol{\phi} \sim \textrm{Categorical}(\phi_i)
    \end{split}
\end{equation}
Similarly, for the receiver,
\begin{equation}
    \begin{split}
        v_{n} \bigl\lvert z_{n}^{(B)} = t, c_{t}^{(B)} = i, \boldsymbol{\phi} \sim \textrm{Categorical}(\phi_i)
    \end{split}
\end{equation}

\vspace{0.2in}
\noindent
{\bf Discussion of the ADND model: }
As defined in Section \ref{ADND-subsec: SBM intro}, vertex exchangeable models such as the SBM and IRM assume a fixed node set and model the network by clustering the nodes. The edges are represented as nonzero-valued entries of the corresponding adjacency matrix.
In contrast, ADND models the network as a sequence of edges, and clusters edges into a hierarchy of topics, rather than nodes. Each node can be associated with multiple topics based on its incoming and outgoing edges. Further, by placing a nonparametric prior distribution over the topics, the ADND model can incorporate previously unseen nodes and is flexible for predicting unseen edges.

\subsection{Variational Inference}
\label{ADND-subsec:VI}
In the above representation of ADMD model, the observed data is 
$x=(u_{n}, v_{n})_{\substack{\scriptscriptstyle n=1,\cdots, \infty}}$, the hidden variables are $z=\{(z_{n}^{(A)})_{\substack{\scriptscriptstyle n=1,\cdots, \infty}}, 
(z_{n}^{(B)})_{\substack{ \scriptscriptstyle n=1,\cdots, \infty}}, 
(c_{t}^{(A)})_{\substack{ \scriptscriptstyle t=1,\cdots, \infty}}, 
(c_{t}^{(B)})_{\substack{ \scriptscriptstyle t=1,\cdots, \infty}},\\
(\beta_{t}^{(A)})_{\substack{ \scriptscriptstyle t=1,\cdots, \infty}}, 
(\beta_{t}^{(B)})_{\substack{ \scriptscriptstyle t=1,\cdots, \infty}}, 
(\beta_{i}^{(H)})_{\scriptscriptstyle i=1,\cdots,\infty}, 
(\phi_i)_{\scriptscriptstyle i=1,\cdots,\infty} \}$.
Due to the difficulty in directly computing $p(z|x)$, we use a fully factorized variational distribution for approximation, %
\begin{equation}
\begin{split}
    &q(\boldsymbol{\beta}^{(H)'}, \boldsymbol{\beta}^{(A)'}, \boldsymbol{\beta}^{(B)'}, \boldsymbol{c}^{(A)}, \boldsymbol{c}^{(B)}, \boldsymbol{z}^{(A)}, \boldsymbol{z}^{(B)}, \boldsymbol{\phi}) \\
     = &q(\boldsymbol{\beta}^{(H)'})q(\boldsymbol{\beta}^{(A)'})q(\boldsymbol{\beta}^{(B)'})q(\boldsymbol{c}^{(A)})q(\boldsymbol{c}^{(B)})q(\boldsymbol{z}^{(A)})q(\boldsymbol{z}^{(B)})q(\boldsymbol{\phi}) 
\end{split}
\end{equation}
which further factorizes into
\begin{subequations}
    \begin{align}
        q(\boldsymbol{c}^{(A)}) &= \prod_{t=1}^{K^{(A)}} q(c_{t}^{(A)} | \varphi_{t}^{(A)}), \quad
        q(\boldsymbol{c}^{(B)}) = \prod_{t=1}^{K^{(B)}} q(c_{t}^{(B)} | \varphi_{t}^{(B)}), \\
        q(\boldsymbol{z}^{(A)}) &= \prod_{n} q(z_{n}^{(A)} | \zeta_{n}^{(A)}), \quad
        q(\boldsymbol{z}^{(B)}) = \prod_{n} q(z_{n}^{(B)} | \zeta_{n}^{(B)}), \\
        q(\boldsymbol{\phi}) &= \prod_{i=1}^{K^{(H)}} q(\phi_{i} | \lambda_{i}), 
    \end{align}
\end{subequations}
where the variational parameters are $\varphi_{t}^{(A)}$ (multinomial), $\varphi_{t}^{(B)}$ (multinomial), $\zeta_{n}^{(A)}$ (multinomial), $\zeta_{n}^{(B)}$ (multinomial), and $\lambda_i$ (Dirichlet). We set the corpus-level topic truncation $K^{(H)}$ and the document-level truncation $K^{(A)}$ and $K^{(B)}$. $K^{(H)}$ is usually much bigger than $K^{(A)}$ or $K^{(B)}$ because in practice, the number of topics needed for each document is far fewer than that required for the corpus. The factorized forms of the corpus-level stick proportions and bottom-level stick proportions are 
\begin{subequations}
    \begin{align}
        q(\boldsymbol{\beta}^{(H)'}) &= \prod_{i=1}^{K^{(H)}} q(\beta_{i}^{(H)'} | a^{(H)}_{i}, b^{(H)}_{i}), \\
        q(\boldsymbol{\beta}^{(A)'}) &= \prod_{t=1}^{K^{(A)}} q(\beta_{t}^{(A)'} | a^{(A)}_{t}, b^{(A)}_{t}), \quad
        q(\boldsymbol{\beta}^{(B)'}) = \prod_{t=1}^{K^{(B)}} q(\beta_{t}^{(B)'} | a^{(B)}_{t}, b^{(B)}_{t}), 
    \end{align}
\end{subequations}
where $(a^{(H)}_{i}, b^{(H)}_{i})$, $(a^{(A)}_{t}, b^{(A)}_{t})$, and $(a^{(B)}_{t}, b^{(B)}_{t})$ are parameters of beta distributions. 

The evidence lower bound (ELBO) in variational theory \citep{blei2017variational} lower bounds the marginal log-likelihood of the observed data. ELBO is defined as follows.
\begin{equation}
\begin{split}
    & \mathbb{E}_{q}[\textrm{log}\frac{p(x, z)}{q(z)}]
    = \mathbb{E}_{q}[\textrm{log}(p(x, z)] + H(q)
    \\
    & = 
    \mathbb{E}_{q} 
    \Big[\textrm{log} p(\mathbf{u} | \mathbf{c}^{(A)}, \mathbf{z}^{(A)}, \mathbf{\phi}) p(\mathbf{c}^{(A)} | \boldsymbol{\beta}^{(H)}) p(\mathbf{z}^{(A)} | \boldsymbol{\beta}^{(A)})
    \\
    & p(\boldsymbol{\beta}^{(A)}) p(\mathbf{v} | \mathbf{c}^{(B)}, \mathbf{z}^{(B)}, \boldsymbol{\phi}) p(\mathbf{c}^{(B)} | \boldsymbol{\beta}^{(H)}) p(\mathbf{z}^{(B)} | \boldsymbol{\beta}^{(B)}) p(\boldsymbol{\beta}^{(B)})
    \Big]
    \\
    & + H(q(\mathbf{c}^{(A)})) + H(q(\mathbf{z}^{(A)})) + H(q(\boldsymbol{\beta}^{(A)})) + H(q(\mathbf{c}^{(B)}))
    + H(q(\mathbf{z}^{(B)})) + H(q(\boldsymbol{\beta}^{(B)})) \\
    & + \mathbb{E}_{q} \Big[\textrm{log} p(\boldsymbol{\beta}^{(H)}) p(\boldsymbol{\phi}) \Big] 
    + H(q(\boldsymbol{\beta}^{(H)})) + H(q(\boldsymbol{\phi})),
\end{split}
\end{equation}
where $H(\cdot)$ is the entropy of the variational distribution. Taking derivatives of this lower bound with respect to each variational parameter leads to the coordinate
ascent updates.

\noindent
{\bf Document-level Updates: } At the document level, we update the following parameters,
\begin{subequations}
\label{ADND-eq:document update}
\begin{align}
    &a_{t}^{(A)} = 1 + \sum_{n} \zeta^{(A)}_{nt}, \quad b_{t}^{(A)} = \tau + \sum_{n} \sum_{s=t+1}^{K^{(A)}} \zeta^{(A)}_{ns}, \\
    &a_{t}^{(B)} = 1 + \sum_{n} \zeta^{(B)}_{nt}, \quad b_{t}^{(B)} = \tau + \sum_{n} \sum_{s=t+1}^{K^{(B)}} \zeta^{(B)}_{ns}, \\
    &\varphi_{ti}^{(A)} \propto \textrm{exp}(\sum_{n} \zeta_{nt}^{(A)} \mathbb{E}_q[\textrm{log}p(u_{n}|\phi_i)] + \mathbb{E}_q[\textrm{log}\beta^{(H)}_{i}])  \\
    &\zeta_{nt}^{(A)} \propto \textrm{exp}(\sum_{i=1}^{k^{(H)}} \varphi_{ti}^{(A)} \mathbb{E}_q[\textrm{log}p(u_{n}|\phi_i)] + \mathbb{E}_q[\textrm{log}\beta^{(A)}_{t}])  \\
    &\varphi_{ti}^{(B)} \propto \textrm{exp}(\sum_{n} \zeta_{nt}^{(B)} \mathbb{E}_q[\textrm{log}p(u_{n}|\phi_i)] + \mathbb{E}_q[\textrm{log}\beta^{(H)}_{i}])  \\
    &\zeta_{nt}^{(B)} \propto \textrm{exp}(\sum_{i=1}^{k^{(H)}} \varphi_{ti}^{(B)} \mathbb{E}_q[\textrm{log}p(u_{n}|\phi_i)] + \mathbb{E}_q[\textrm{log}\beta^{(B)}_{t}])  
\end{align}
\end{subequations}

\noindent
{\bf Corpus-level Updates: } At the corpus level, we update the following parameters,
\begin{subequations}
\label{ADND-eq:corpus update}
\begin{align}
    &a_{i}^{(H)} = 1 + [\sum_{t=1}^{K^{(A)}} \phi^{(A)}_{ti} + \sum_{t=1}^{K^{(B)}} \phi^{(B)}_{ti}], \\
    &b_{i}^{(H)} = \gamma + [\sum_{t=1}^{K^{(A)}} \sum_{l=i+1}^{K^{(H)}} \varphi_{tl}^{(A)} + \sum_{t=1}^{K^{(B)}} \sum_{l=i+1}^{K^{(H)}} \varphi_{tl}^{(B)} ], \\
    &\lambda_{iw} = \eta + [\sum_{t=1}^{K^{(A)}} \varphi_{ti}^{(A)}(\sum_{n} \zeta_{nt}^{(A)} \mathbf{1}(u_{n}=w)) +  \sum_{t=1}^{K^{(B)}} \varphi_{ti}^{(B)}(\sum_{n} \zeta_{nt}^{(B)} \mathbf{1}(v_{n}=w))] 
\end{align}
\end{subequations}
The expectations involved are taken under the variational distribution $q$ and can be found in \citep{wang2011online}.

\vspace{0.2in}
\noindent
{\bf Holdout Likelihood of New Edges: } Given the variational inference for the HDP, we can compute the log likelihood of the new edge $(u_{\textrm{new}}, v_{\textrm{new}})$ as follows:
\begin{equation}
\label{ADND-eq:posterior}
\begin{split}
    \textrm{log}P(\small(u_{\textrm{new}}, v_{\textrm{new}} \small)) &= \textrm{log} (\sum_{i=1}^{K^{(H)}} \bar{\beta}_{i}^{(H)} \bar{\lambda}_{i u_{\textrm{new}}} \ \bar{\beta}_{i}^{(H)} \bar{\lambda}_{i v_{\textrm{new}}}) \\
\end{split}
\end{equation}
where $\bar{\boldsymbol{\lambda}} \in \mathbb{R}^{K^{(H)} \times (W+1)}$ is the variational expectation of the corpus-level topic distribution $\boldsymbol{\lambda}$ given the observation $(u_{n}, v_{n})_{\substack{\scriptscriptstyle n=1,\cdots, \infty}}$, and $\bar{\boldsymbol{\beta}}^{(H)} \in \mathbb{R}^{K^{(H)}}$ is the variational expectation of the corpus-level stick lengths measuring the expected proportions of the corpus-level topics.

\subsection{Detecting Anomalous Edges with Conformal Prediction}
\label{ADND-subsec: detect anomaly}
The previous subsections gave a setup of the edge exchangeable model. In this subsection, we define an anomalous edge that is significantly different from the remaining edges which were generated to be edge-exchangeable. We formulate the anomaly detection problem as a hypothesis test of the exchangeability assumption. We then propose an algorithm, a Conformal detector, which has a guaranteed upper bound for the false positive rate.  

\begin{definition}[Anomalous Edges]\label{defn: anomaly}
An anomalous edge $e_{N+1}$ is an edge which breaks the exchangeability of an edge sequence $e_1,\cdots, e_N$ if 
\begin{equation}
  (e_1,\cdots, e_{N}) \overset{d}{=} (e_{\pi_1},\cdots, e_{\pi_{N}}),
\end{equation}
for any permutation $\pi: \{1,\cdots,N\} \rightarrow \{1,\cdots,N\}$,
and
\begin{equation}
  (e_1,\cdots,e_{N+1}) \overset{d}{\neq} (e_{\pi^+_1},\cdots,e_{\pi^+_{N+1}})
\end{equation}
for some permutation $\pi^+: \{1,\cdots,N+1\} \rightarrow \{1,\cdots,N+1\}$.    
\end{definition}

Suppose an existing edge sequence $e_1,e_2,\cdots,e_{N}$ is either constructed or approximated using the ADND model in Section \ref{ADND-subsec: ADND}, then the edge sequence is exchangeable, i.e.,
\begin{equation}
  (e_1,e_2,\cdots,e_{N}) \overset{d}{=} (e_{\pi_1},e_{\pi_2},\cdots,e_{\pi_{N}}),
\end{equation}
According to Definition \ref{defn: anomaly}, an edge $e_{N+1}$ is anomalous if it breaks edge exchangeability, i.e.,
\begin{equation}
  (e_1,e_2,\cdots,e_{N+1}) \overset{d}{\neq} (e_{\pi_1},e_{\pi_2},\cdots,e_{\pi_{N+1}}),
\end{equation}
for some permutation $\pi^+: \{1,\cdots,N+1\} \rightarrow \{1,\cdots,N+1\}$. We next demonstrate how we use such a definition of anomalous edges to construct an anomaly detection algorithm.

\vspace{0.2in}
\noindent
{\bf Conformal Anomalous Edge Detection: }
Our algorithm is based on conformal prediction \citep{shafer2008tutorial}. 
Conformal prediction uses labeled training data to determine the confidence in new predictions. Its output is a prediction set that is guaranteed to contain the true label with a specified confidence level as long as the data is exchangeable \citep{zeni2020conformal}. To this end, conformal prediction estimates a $p$-value for each possible label for a new example. In our application, we have training edges that are labeled as normal edges, and we aim to use conformal prediction to determine the confidence that new edges are also normal. Specifically, the conformal detection algorithm we propose (Algorithm \ref{alg:base}) detects whether an edge $e_{N+1}$ is anomalous or not by using conformal prediction to estimate the $p$-value for the null hypothesis: 

\begin{equation}
\begin{split}
  H_0:  e_1,e_2,\cdots,e_{N+1} \textrm{ forms an exchangeable sequence.}
\end{split}
\end{equation}

In order to estimate the $p$-values for the null hypothesis $H_0$, we employ a \emph{non-conformity measure} (NCM) \citep{shafer2008tutorial} to measure how different an observation is relative to other observations. 
In detecting anomalous edges, to measure how different an edge $e_{n}$ is relative to other edges in the edge set $E=\{e_1,\cdots, e_{N+1}\}$, we define the NCM as the negative likelihood of edge $e_{n}$ given other edges in the edge set $E$, which can be computed according to Eq.~(\ref{ADND-eq:posterior}) as follows:
\begin{equation} \label{ADND-eq: NCM}
\begin{split}
    \alpha_{n} &= -P(e_{n}|E\setminus\{e_{n}\}) \\
\end{split}
\end{equation}
$\alpha_n$ in (\ref{ADND-eq: NCM}) constitutes the test statistic for our detector, which determines the corresponding $p$-value as follows:
\begin{equation} \label{ADND-eq: p-value}
\begin{split}
  p_{n} = &
  \frac{1}{N+1} \bigl \lvert\{i=1,\cdots,N+1 \mid \alpha_{n} > \alpha_{i} \}\bigr \rvert  +
  \frac{u}{N+1} \bigl \lvert\{i=1,\cdots,N+1 \mid \alpha_{n} = \alpha_{i} \}\bigr \rvert, \\
\end{split}
\end{equation}
where $u\sim \textrm{Unif}(0,1)$. Given a significance level or an \emph{anomaly threshold}, $\epsilon$, if $ p_{n} \le \epsilon$, then $e_{n}$ is classified as an anomalous edge: $\textrm{Anom}_{n}^{\epsilon} = 1$. Otherwise, $e_{n}$ is classified as normal: $\textrm{Anom}_{n}^{\epsilon} = 0$. 

Note that we will need to fit the ADND model to compute the log likelihood in Eq.~(\ref{ADND-eq:posterior}) for any $E\setminus \{e_n\}, n=1,\cdots,N+1$. To avoid such computation burden, we take advantage of the inductive conformal predictor \citep{papadopoulos2008inductive}. We split\footnote{Due to the edge-exchangeability, the split can be done randomly. } the existing edge set (excluding the new edge to be tested) into a training set $E_{\textrm{train}}$ and a calibration set $E_{\textrm{calib}}$. We fit the ADND model based on $E_{\textrm{train}}$ and compute the non-conformity scores for each observation in $E_{\textrm{calib}}$. The $p$-value of the new edge is computed by comparing its non-conformity score with the ones of $E_{\textrm{calib}}$:
\begin{equation} \label{ADND-eq: inductive p-value}
\begin{split}
  p_{n} = &
  \frac{\bigl \lvert\{i\in E_{\textrm{calib}} \cup \{N+1\} \mid \alpha_{n} > \alpha_{i} \}\bigr \rvert }{|E_{\textrm{calib}}|+1}  +
  \frac{u \bigl \lvert\{i\in E_{\textrm{calib}} \cup \{N+1\} \mid \alpha_{n} = \alpha_{i} \}\bigr \rvert}{|E_{\textrm{calib}}|+1} , \\
\end{split}
\end{equation}

We present the complete anomaly detection algorithm (Conformal detector) in Algorithm \ref{alg:base}. Presented below is our main finding: we demonstrate that Algorithm \ref{alg:base} has a false positive rate (Definition \ref{defn: fpr}), meaning that it incorrectly detects normal edges as anomalies, that is no greater than the predetermined anomaly threshold of $\epsilon$.

\begin{definition}[False Positive Rate]\label{defn: fpr}
A false positive rate refers to the proportion of  non-anomalous data points that are incorrectly labeled as anomalous or abnormal. For the conformal anomaly detection algorithm, the false positive rate represents the proportion of normal edges that are incorrectly detected as anomalous (Definition \ref{defn: anomaly}).  
\end{definition}

\begin{proposition}\label{prop: guarantee}
Assuming that a new edge $e_{N+1}$ is normal in relation to the existing edges, i.e., the edges $e_1,e_2,\cdots,e_{N+1}$ form an exchangeable sequence. For the choice of anomaly score in Eq. (\ref{ADND-eq: NCM}), the specified anomaly threshold $\epsilon$ corresponds to an upper bound of the false positive rate (Definition \ref{defn: fpr})
of Algorithm 1:
\begin{equation}
  P(\textrm{Anom}_{N+1}^{\epsilon} =1) \leq \epsilon
\end{equation}
\end{proposition}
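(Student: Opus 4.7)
The plan is to reduce the claim to the marginal validity guarantee of the inductive (split) conformal predictor. The key structural observation is that, conditional on the ADND model fitted from $E_{\textrm{train}}$, the non-conformity measure in Eq.~(\ref{ADND-eq: NCM}) is a fixed function $g$ applied pointwise to each edge: $\alpha_i = g(e_i)$ for every $i \in E_{\textrm{calib}} \cup \{N+1\}$. So the joint law of the scores over calibration-plus-test inherits whatever joint law the underlying edges have.

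First I would argue that under $H_0$ the edges $\{e_i : i \in E_{\textrm{calib}} \cup \{N+1\}\}$ form an exchangeable sub-sequence of $(e_1,\ldots,e_{N+1})$. The train/calibration split is made randomly and independently of the edge values, so conditioning on the split preserves exchangeability of the calibration-plus-test edges. Applying the deterministic map $g$ componentwise preserves exchangeability, hence the scores $\{\alpha_i : i \in E_{\textrm{calib}} \cup \{N+1\}\}$ are exchangeable conditional on the fitted model and on the split.

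Second, I would apply the standard conformal argument to these $m = |E_{\textrm{calib}}|+1$ exchangeable scores. By exchangeability, the rank of $\alpha_{N+1}$ among the $m$ scores is uniform on $\{1,\ldots,m\}$; inserting the independent $u\sim\textrm{Unif}(0,1)$ for tie-breaking yields a smoothed rank whose distribution is exactly $\textrm{Unif}(0,1)$. The $p$-value in Eq.~(\ref{ADND-eq: inductive p-value}) is precisely this smoothed rank (rescaled to $[0,1]$), so $p_{N+1}\sim\textrm{Unif}(0,1)$. Since $\textrm{Anom}_{N+1}^{\epsilon}=1$ iff $p_{N+1}\leq\epsilon$, we obtain $P(\textrm{Anom}_{N+1}^{\epsilon}=1)=P(p_{N+1}\leq\epsilon)=\epsilon$, which establishes the upper bound. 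Marginalizing over the training split and the fitted model preserves the bound, because it holds conditionally.

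The main obstacle I expect is the careful treatment of ties via the randomization $u$. Without the smoothing term, exchangeability only gives $P(p_{N+1}\leq\epsilon)\leq \lceil\epsilon m\rceil/m$, which can strictly exceed $\epsilon$; the smoothed formula is engineered so that the conditional distribution function of $\alpha_{N+1}$ evaluated at $\alpha_{N+1}$ (with uniform interpolation across any tie group) is exactly $\textrm{Unif}(0,1)$. I would verify this by conditioning on the multiset $\{\alpha_i : i \in E_{\textrm{calib}}\cup\{N+1\}\}$ together with the composition of the tie groups, using exchangeability to conclude that the position of $\alpha_{N+1}$ inside its tie group is uniform, and then checking that the linear interpolation by $u$ produces a $\textrm{Unif}(0,1)$ variable. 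Once tie-handling is in place, the remainder is a direct appeal to the marginal-validity argument for the inductive conformal predictor \citep{papadopoulos2008inductive}, specialized to the ADND-based non-conformity score.
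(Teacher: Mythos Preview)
Your argument is correct and, in fact, tracks Algorithm~\ref{alg:base} more closely than the paper's own proof. The paper's appendix proves the guarantee for the \emph{full} conformal score of Eq.~(\ref{ADND-eq: NCM}), where each $\alpha_n = -P(e_n \mid E_{N+1}\setminus\{e_n\})$ depends on the entire bag of edges; to show that $(\alpha_1,\ldots,\alpha_{N+1})$ inherits exchangeability from $(e_1,\ldots,e_{N+1})$ it invokes a permutation-equivariance result (their Theorem~2, from \citet{kuchibhotla2020exchangeability}) and then a rank/$p$-value validity result (their Theorem~1). You instead work with the \emph{inductive} score actually computed in Algorithm~\ref{alg:base}, namely $\alpha_i = -P(e_i \mid E_{\textrm{train}})$, condition on the random split and on the fitted model so that the score becomes a fixed pointwise map $g$, and then use only the elementary fact that a pointwise map preserves exchangeability of $\{e_i : i \in E_{\textrm{calib}}\cup\{N+1\}\}$. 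This buys you a shorter, more self-contained proof that does not need the equivariance lemma, and it matches the algorithm's actual computation; the paper's route, by contrast, handles the harder symmetric NCM and therefore covers the transductive variant as well. Both arrive at the same endpoint---the smoothed rank $p$-value is $\textrm{Unif}(0,1)$ under $H_0$---and your treatment of ties via the auxiliary $u$ is exactly the mechanism the paper's Theorem~1 encapsulates.
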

\begin{proof}
    See Appendix.
\end{proof}

\begin{algorithm2e}\label{alg:base}
\caption{(Conformal Detector) Anomalous edge detection based on inductive conformal prediction}
\label{alg:net}
\DontPrintSemicolon
\LinesNumbered
\KwIn{anomaly threshold $\epsilon$, training edge set $E_{\textrm{train}}$, calibration edge set $E_{\textrm{calib}}$, and a new edge $e_{N+1}$ to be tested.}
\KwOut{Indicator variable $\textrm{Anom}_{N+1}^{\epsilon} \in \{0,1\}$ with performance guarantee (Proposition \ref{prop: guarantee}).}

Initialize the corpus-level topic distributions $\boldsymbol{\lambda} = (\boldsymbol{\lambda}_i)_{\scriptscriptstyle i=1, \cdots, K^{(H)}}$, the parameters of the beta distribution $\boldsymbol{a}^{(H)} = (a^{(H)}_{i})_{\scriptscriptstyle i=1, \cdots, K^{(H)}-1}$, $\boldsymbol{b}^{(H)} = (b^{(H)}_{i})_{\scriptscriptstyle i=1, \cdots, K^{(H)}-1}$.

\tcc{\textcolor{blue}{Approximate the training edges using the ADND edge distribution}}
\While{stopping criterion is not met}
    {Compute $\boldsymbol{a}^{(A)}, \boldsymbol{b}^{(A)}, \boldsymbol{a}^{(B)}, \boldsymbol{b}^{(B)}, \boldsymbol{\psi}_{t}^{(A)}, \boldsymbol{\zeta}_{n}^{(A)}, \boldsymbol{\psi}_{t}^{(B)}$ and $\boldsymbol{\zeta}_n^{(B)}$ using variational inference through document-level updates, Eq. (\ref{ADND-eq:document update}).

    Update $\boldsymbol{\lambda}, \boldsymbol{a}^{(H)}, \boldsymbol{b}^{(H)}$ using Eq. (\ref{ADND-eq:corpus update}).}

\tcc{\textcolor{blue}{Compute the non-conformity scores of the calibration edges}}
\For{$j \leftarrow 1 \textrm{ to } |E_{\textrm{calib}}| $} 
{Compute $\alpha_j = - P(e_j|E_{\textrm{train}})$ according to Eq.~(\ref{ADND-eq:posterior}).}

\tcc{\textcolor{blue}{Compute the $p$-value that the test edge is normal}}

Compute $p_{N+1}$ according to Eq.~(\ref{ADND-eq: inductive p-value}).

\uIf{$p_{N+1} \le \epsilon$}{
$\textrm{Anom}_{N+1}^{\epsilon} \leftarrow 1$}       
\Else{
$\textrm{Anom}_{N+1}^{\epsilon} \leftarrow 0$}

\end{algorithm2e}

\section{Numerical Examples of Anomalous Edge Detection}
\label{ADND-sec: numerical}
In this section, we illustrate how the proposed Conformal detector (Algorithm 1) can detect anomalous transactions in the Ethereum blockchain transaction network.

\vspace{0.2in}
\noindent
{\bf Dataset.}
Decentralized exchanges (DEXs), which enable the non-custodial trade of digital assets, are implemented on the Ethereum blockchain. The most popular DEX that uses an automated liquidity protocol is Uniswap \citep{adams2021uniswap}. Since Uniswap V2, it has allowed the creation of arbitrary ERC20/ERC20\footnote{The ERC20 (\textit{Ethereum Request for Comment \#20}) stands for fungible tokens on the Ethereum blockchain.} pairs that store the pooled reserves of two different assets.
Uniswap V3, the most recent version, debuts in May 2021. Compared to top centralized exchanges (CEXs), Uniswap V3 is more liquid \citep{UniswapV3}. 

In Uniswap, traders can exchange assets using a swap function that involves adding one asset to a liquidity pair while removing the other. To improve routing efficiency, traders can perform multi-hop swaps, such as exchanging token $X$ for $Y$ and then exchanging $Y$ for $Z$. However, if a liquidity pair for tokens $X$ and $Z$ does not exist, this can be treated as a potential pricing anomaly and an arbitrage opportunity. In our experiment, we use this scenario as the ground truth anomaly to be detected.

To do this, we gathered all the multi-hop swap transactions in Uniswap V3. We accessed the Ethereum Mainnet using the Python library \textit{Web3.py}\footnote{\url{https://web3py.readthedocs.io/}}. 
We searched all blocks with a height between $12500000$ and $12503000$, which corresponds to about three weeks after the debut of Uniswap V3 on May 5, 2021. A directed edge from the source token to the destination token is used to model each transaction. There are 140 anomalous edges in our collection of 307 test edges, 363 training edges, and 363 calibration edges.

\vspace{0.2in}
\noindent
{\bf Baseline.} 
We use the RHSS method \citep{ranshous2015anomaly} as the baseline, which is an edge stream anomaly detector applicable to directed multigraphs. RHSS estimates the edge probabilities by combining three individual scores, i.e., sample, preferential attachment, and homophily scores. We assign equal weighting coefficients to the individual scores.

\vspace{0.2in}
\noindent
{\bf Evaluation metrics.}
Both the Conformal detector and the RHSS method output a score per edge (the Conformal detector outputs a $p$-value; the RHSS method outputs an estimate of edge probabilities. Both are lower for more anomalous edges). We compute the precision and recall following the method outlined in \citep{eswaran2018sedanspot}: sorting the edges in ascending order of their scores, we count the number of edges $c_k$ flagged correctly as anomalous among the top $k$ edges, for every cut-off rank $k\in \mathbb{N}$. Let $C$ be the total number of ground truth anomalies, we compute: precision$@k$ = $c_k/k$ and recall$@k$ = $c_k/C$. In addition, we compute the true positive rate (proportion of anomalous edges correctly classified as anomalies) and the false positive rate (proportion of normal edges mistakenly classified as anomalies) for a range of threshold values. We also report AUC (area Under the ROC curve).

\begin{figure} 
	\centering
	\includegraphics[width=0.8\textwidth]{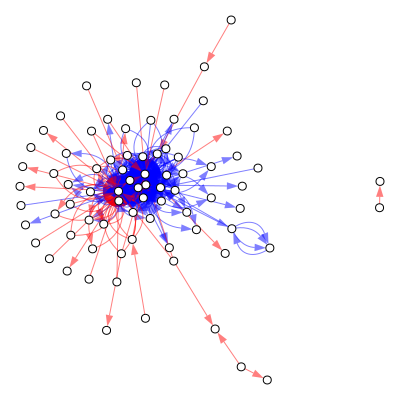}
	\caption{The Uniswap V3 token swap network collected as described in Section \ref{ADND-sec: numerical}. The edges that are normal are blue, whereas the anomalous ones are red. The normal edges are approximated using an exchangeable ADND model, and we aim to detect the anomalous edges that represent tokens swaps without existing liquidity pools. The normal edges and anomalous edges suggest different generating properties.}
	\label{ADND-fig: network}
\end{figure}

Fig. \ref{ADND-fig: network} shows the token swaps collected from Uniswap V3. The normal edges are blue; the anomalous edges are red. 
The left of Fig. \ref{ADND-fig: PR} shows the precision vs. recall curves, while the right shows the ROC curve and the AUC computed. The performance of the Conformal detector is comparable to that of the RHSS method. 
Overall, the comparison result indicates that the Conformal detector is comparable with the RHSS method in anomalous edge detection. 
Moreover, the relatively small performance gap illustrates the robustness of the Conformal detector to misspecified models. 

\begin{figure} 
	\centering
	\includegraphics[width=0.8\textwidth]{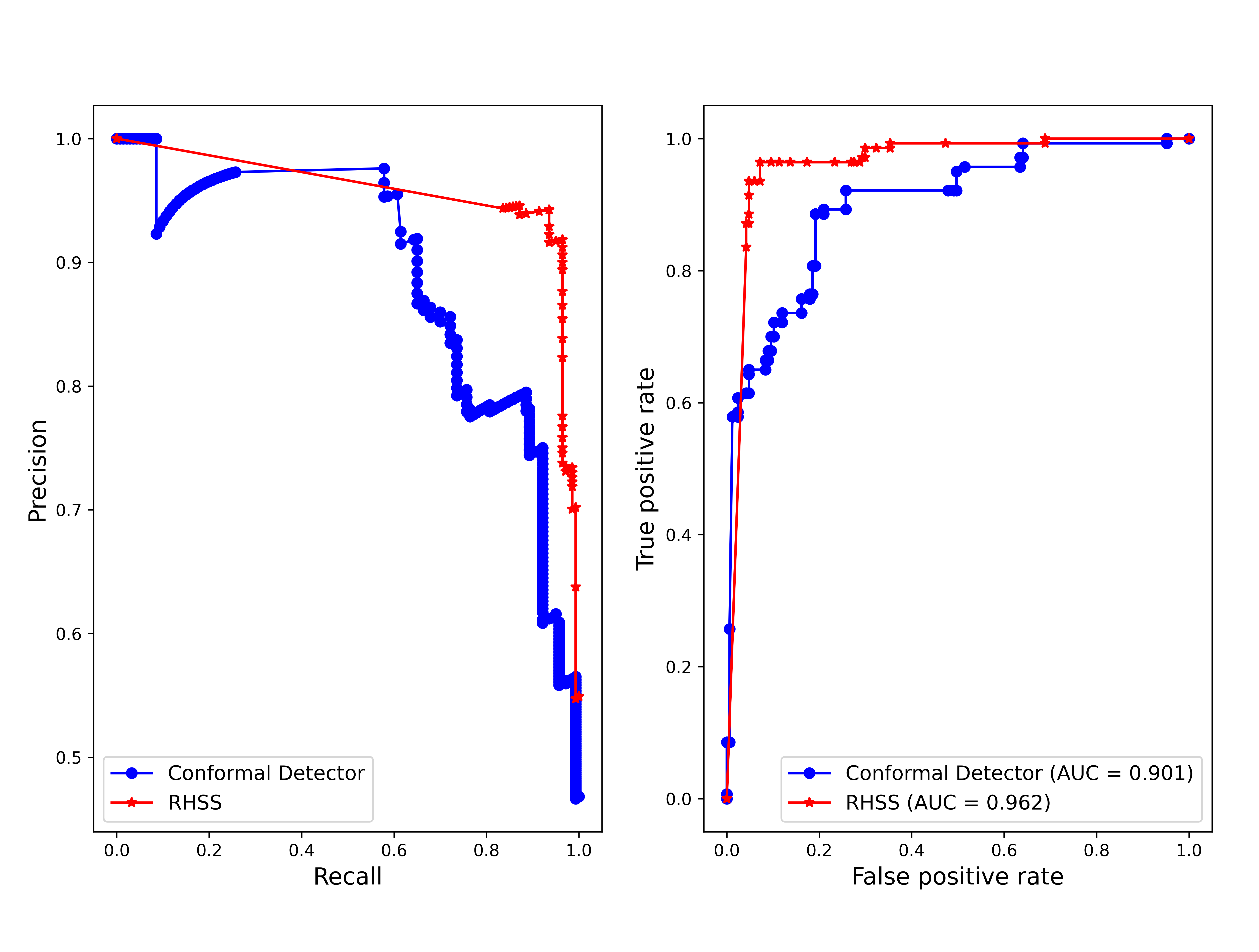}
	\caption{The precision-recall curve and the ROC curve of the Conformal Detector (blue circles), and the RHSS method (red stars). The Conformal Detector is comparable to the RHSS method.}
	\label{ADND-fig: PR}
\end{figure}

\section{Conclusions and Extensions}
\label{ADND-sec:conclusion}
\noindent
{\bf Conclusions: }
This paper addressed detecting anomalous edges in social networks and transaction graphs based on edge exchangeability. We motivated edge exchangeability in social networks via real-world examples. 
We proposed anomalous edge detection algorithm (Conformal detector) by combining conformal prediction with the edge exchangeable model. We proved that Conformal detector guarantees a false positive rate (flagging normal edges as anomaly) lower than a pre-specified threshold. In the numerical experiments, we verified that edge exchangeability is a valid criterion for distinguishing anomalous edges from normal edges using Ethereum blockchain transaction dataset. In addition, we demonstrated that the proposed Conformal detector achieves a comparable anomaly detection performance to the baseline method on synthetic datasets.

\vspace{0.2in}
\noindent
{\bf Limitations and Extensions: }
There are several limitations of the proposed method that open up future research directions. Firstly, the assumption of edge exchangeability may not hold for real-world networks whose properties change over time as they expand. \citep{ghalebi2019nonparametric, ghalebi2019sequential} have incorporated distance-dependent Chinese restaurant process to model nonexchangeable edge sequence by preferentially assigning edges to communities that have been recently active. Secondly, although the validity of conformal prediction (e.g. false positive rate in the proposed Conformal detector algorithm) is guaranteed, the efficiency of conformal prediction (i.e., the prediction region of normal edges) depends on the non-conformity measure. It is worthwhile to investigate cascades of predictors to improve the efficiency \citep{fisch2020efficient}. Finally, combining the proposed conformal prediction framework with graph neural networks or embedding methods remains an interesting future direction. However, conformal prediction crucially depends on the assumption of exchangeability which requires verification of exchangeability in the graph-structured data.

\acks{We would like to acknowledge support for this project from the U. S. Army Research Office under grant W911NF-21-1-0093, the National Science Foundation under grant CCF-2112457, and a grant  from City University of Hong Kong (Project No. 9610639). }

\newpage
\appendix
\section*{Proof of Proposition \ref{prop: guarantee}}
\label{ADND-sec:appendix}
\begin{definition}[Rank, ascending]
For a set of $n$ real numbers $X=\{x_1,\cdots,x_n\}$, define the rank of $x_i$ among $X$ as 
\begin{equation}
\textrm{rank}(x_i;X) =
\begin{cases}
    & |\{j\in \{1,\cdots,n\}: x_j \leq x_i \}| \\
    &\quad \quad  \quad \textrm{if all elements of $X$ are distinct;} \\
    & |\{j\in\{1,\cdots,n\}:x_j+\xi U_j \leq x_i +\xi U_i  \}| \\
    &\quad \quad  \quad \textrm{if some elements of $X$ are equal.}
\end{cases}
\end{equation}
where $\xi >0$ is a small number such that
\begin{equation}
    \xi \leq \min_{\forall i,j\in \{1,\cdots,n\}: x_i\neq x_j} \frac{|x_i - x_j|}{2}
\end{equation}
and $U_1,\cdots,U_n \sim \textrm{Unif}(-1,1)$ are iid uniform random variables. Because $U_1,\cdots,U_n$ are almost surely distinct, $x_i+\xi U_i$ are also distinct. The addition of $U_1,\cdots,U_n$ breaks potential ties among the original sequence $x_1,\cdots,x_n$.    
\end{definition}

\noindent
{\bf Theorem 1. \citep{kuchibhotla2020exchangeability}}
If $W_1,\cdots,W_n$ are exchangeable random variables, then
\begin{equation}
  P(\textrm{rank}(W_i;\{W_1,\cdots,W_n\}) \leq m) = \frac{\lfloor m \rfloor}{n},
\end{equation}
where $\textrm{rank}(W_i;\{W_1,\cdots,W_n\})$ represents the rank of $W_i$ among $\{W_1,\cdots,W_n\}$, $\lfloor m \rfloor$ represents the largest integer smaller than or equal to $m$. Moreover, the random variable $\textrm{rank}(W_i;\{W_1,\cdots,W_n\})/n$ is a valid $p$-value, i.e.,
\begin{equation}
    P(\textrm{rank}(W_i;\{W_1,\cdots,W_n\})/n \leq \alpha) \leq \alpha \quad \forall \alpha \in [0,1].
\end{equation}

\vspace{0.2in}
\noindent
{\bf Theorem 2. \citep{kuchibhotla2020exchangeability}} 
Suppose $W=(W_1,\cdots,W_n)\in \mathcal{W}^n$ is a vector of exchangeable random variables. Fix a transformation $G:\mathcal{W}^n \rightarrow (\mathcal{W}')^m$. If for each permutation $\bar{\pi}:\{1,\cdots,m\} \rightarrow \{1,\cdots,m\}$ there exists a permutation $\underline{\pi}:\{1,\cdots,n\} \rightarrow \{1,\cdots,n\}$ such that
\begin{equation}
  \bar{\pi} G(w) = G(\underline{\pi} w), \quad \forall w\in \mathcal{W}^n,
\end{equation}
then $G(.)$ preserves exchangeability of $W$.

\vspace{0.2in}
\noindent
{\bf Proposition \ref{prop: guarantee}. (Section \ref{ADND-subsec: detect anomaly})} Assume that $e_1,e_2,\cdots,e_{N+1}$ form a sequence of exchangeable random variables. 
For the NCM as defined in Eq. (\ref{ADND-eq: NCM})), the specified anomaly threshold $\epsilon$ corresponds to an upper bound of the probability of the event that $e_{N+1}$ is classified as an anomalous edge by Algorithm 1:
\begin{equation}
  P(\textrm{Anom}_{N+1}^{\epsilon} =1) \leq \epsilon
\end{equation}
{\bf Proof.} 
Let $e_1,e_2,\cdots,e_{N+1} \in \mathcal{E}$ where $\mathcal{E}$ represents the set of all possible edges.
For any permutation $\pi:\{1,\cdots,N+1\} \rightarrow \{1,\cdots,N+1\}$, 
\begin{equation}
\begin{split}
    &\pi \Bigg(-P(e_1|E_{N+1}\setminus\{e_1\}), \cdots, -P(e_{N+1}|E_{N+1}\setminus\{e_{N+1}\}) \Bigg)
    \\ = &
    \Bigg(-P(e_{\pi_1}|E_{N+1}\setminus\{e_{\pi_1}\}), \cdots, -P(e_{\pi_{N+1}}|E_{N+1}\setminus\{e_{\pi_{N+1}}\}) \Bigg) \\
\end{split}
\end{equation}
Theorem 2 indicates that the negative likelihood of an edge, $\alpha: \mathcal{E}\rightarrow \mathbb{R}$, is a transformation that preserves exchangeability.
Thus $\alpha_1,\cdots,\alpha_{N+1}$ are  real-valued  exchangeable  random  variables.

The right hand side of Eq.~(\ref{ADND-eq: p-value}) is the rank of $\alpha_{N+1}$ among $\{\alpha_1,\cdots,\alpha_{N+1}\}$. 
According to Theorem 2, the $p$-value constructed from Eq.~(\ref{ADND-eq: p-value}) is valid, i.e., $p_{N+1}$ is distributed uniformly on $[0,1]$ when $e_{1},\cdots,e_{N+1} \in \mathcal{E}$ be exchangeable random variables. This implies that $P(\textrm{Anom}_{N+1}^{\epsilon} =1 \mid e_{N+1} \textrm{ is normal})$, i.e., the false alarm rate of Algorithm 1, is less than or equal to $\epsilon$ at any specified anomaly threshold $\epsilon$.


\end{document}